\documentclass[%
 reprint
]{revtex4-2}
\usepackage{iceberg}
\usepackage{quiver}

\begin{document}

\title{A lower bound on the overhead of addressable surgery on sparse stabiliser codes}

\author{Nou\'edyn Baspin}
\affiliation{%
 Iceberg Quantum\\
 Sydney, Australia
}%
\author{{\scriptsize\texttt{nouedyn@iceberg-quantum.com}}}

\begin{abstract}
We show that, for stabilisers codes, there always exists a logical subspace requiring $\tilde{\Omega}(k^2)$ ancillary qubits to be measured with a sparse surgery ancilla. For CSS codes the hard subspace can be taken to be generated by all-X, or all-Z operators. As a result, for codes with large $k$, constant overhead ancillas can only realise a small fraction of possible measurements. 
\end{abstract}

\maketitle

The prospect of practical quantum computers requires the ability to record the outcome of a computation, once completed. However, the process of reading information out of a computer is made challenging by the presence of noise; as measurements used to estimate the computational state can be irreversibly corrupted. 

In quantum error correction, this problem corresponds to performing a reliable logical measurement on an encoded state. 
Early efforts to solve this issue led to the Shor and the Steane measurement schemes \cite{shor1990fault, steane1997active, Huang2023}; but with both suffering from complex state-preparation requirements, they incur non-trivial space and time overheads.

Lattice surgery \cite{bombin2009measurements, Horsman2012}, allows for greatly reducing these operational costs. While originally only applicable to surface codes, then topological codes, the progress toward generalised lattice surgery \cite{Cohen2022, Williamson2024, cowtan2025parallel} seems to hint at the possibility of low overhead measurements for all QLDPC codes. 

A crucial modality in the realisation of surgery, is whether multiple logicals operators can be measured simultaneously, which is often referred to as \emph{parallel} surgery \cite{cowtan2025parallel}. 
A construction of parallel surgery is given in \cite{cowtan2025parallel}, which allows for measuring any set of $t$ commuting logicals on an $n$-qubit QLDPC code, using at most 
\begin{equation} \label{eq:parallel}
    N_\star \leq \tilde{O}(tn)
\end{equation}
additional ancilla qubits. The main contribution of our work is a no-go theorem showing that some measurement incur an incompressible overhead, i.e
\begin{equation}
    N_\star \geq \tilde{\Omega}(k^2)
\end{equation}
In particular, \Cref{eq:parallel} is almost tight for constant rate codes. In which case, taking $t \propto n$, \Cref{thm:k-squared} gives $N_\star = \Theta(n^2)$.

 In full generality, our bound targets addressability, i.e. the capacity to exactly prescribe which degrees of freedom are read. Such a high degree of control can be expressed as the (large) set $\mathcal{F}$ of all the measurements allowed by the surgery scheme. Our result, \Cref{lem:main}, then reads
\begin{equation}
        N_\star \geq \Omega\left( \frac{\log |\mathcal{F}|}{ \log(n)} \right)
\end{equation}

The larger $|\mathcal{F}|$, or the more addressable, the worse is the space cost. We note that this bound seems to point towards a space-time tradeoff: in order to reduce the space overhead, one has to limit the number of logicals that can be measured at every surgery operation (or logical time step), and therefore increase the number of rounds of surgery needed. 

We conclude by discussing the the relationship of this bound with \cite{cowtan2026nearoptimal}. In that work, the authors give an almost-linear overhead surgery construction for a logical subspace, provided the existence of a suitable subcode. As a consequence of \Cref{thm:k-squared}, these subcodes happen to be in limited supply, at least for codes satisfying $k \gg \sqrt{n}$; and thus this method cannot accommodate all combinations of logicals of the same type -- which was also discussed in \cite{cowtan2026nearoptimal}.

\section{Lower bound}
We represent a stabiliser code $C$ on $n$ qubits and $m$ stabilisers as a symplectic complex \cite{yuan2026noncss}:
\begin{equation}
        C=S^{C} \xrightarrow{\sigma^C} P^C\xrightarrow{\hat{\sigma}^C} \bar{S}^{C}
\end{equation}
With $P^C \cong \mathbb{F}_2^{2n}$ representing to the space of physical Pauli operators, while $S^{C} \cong \bar{S}^{C} \cong \mathbb{F}_2^{m}$ represent the stabilisers. 
We also use the following grading
\begin{equation}
        C=C_2 \to C_1 \to C_0
\end{equation}
The space $\ker \hat{\sigma}^C$ is called the \emph{logicals} of $C$, which are classified by equivalence up to stabilisers. The code $C$ has then $2k = \dim H_1(C)$ inequivalent logicals.

Performing surgery on $C$, with a CSS ancilla $D = X^D\to Q^D_X\to \bar{Z}^{D}$, and associated grading $D = D_1 \to D_0 \to D_{-1}$ can then be formally expressed as an instance of a symplectic embedding \cite{yuan2026noncss}.
\begin{equation} \label{eq:symplectic-embedding}
    \begin{tikzcd}
	&& {X^D} & {Q^D_X} & {\bar{Z}^D} \\
	& {S^C} & {P^C} & {\bar{S}^C} \\
	{Z^D} & {Q_Z^D} & {\bar{X}^D}
	\arrow[from=1-3, to=1-4]
	\arrow[from=1-3, to=2-3]
	\arrow["p"'{pos=0.3}, color={rgb,255:red,92;green,214;blue,92}, curve={height=6pt}, dashed, from=1-3, to=3-2]
	\arrow[from=1-4, to=1-5]
	\arrow[from=1-4, to=2-4]
	\arrow["{p^\top}"{pos=0.7}, color={rgb,255:red,92;green,214;blue,92}, curve={height=6pt}, dashed, from=1-4, to=3-3]
	\arrow[from=2-2, to=2-3]
	\arrow[from=2-2, to=3-2]
	\arrow[from=2-3, to=2-4]
	\arrow[from=2-3, to=3-3]
	\arrow[from=3-1, to=3-2]
	\arrow[from=3-2, to=3-3]
\end{tikzcd}
\end{equation}
Where $(g,h)$ is a chain map, $p$ is a defect map satisfying $\hat{g}g={\partial^D}^\top p+ p^{\top} \partial^D $, and $H_0(D)=0$. We say that a map $\partial$ is $\omega$-sparse if both its row and column weight are upper bounded by $\omega$.

\begin{definition}[\cite{yuan2026noncss}]\label{def:measurement}
    The complex depicted in \Cref{eq:symplectic-embedding} measures the logical subspace $\mathfrak{l} = [g] (H_1(D)) \subset H_1(C)$.
\end{definition}

Due to the stochastic nature of quantum error correction, it is, in most cases, impossible to know ahead-of-computation which (commuting) subspace $\mathfrak{l}$ will have to be measured. We thus isolate a subset $\mathcal{F}$ of subspaces of $H_1(D)$ that will be measured during the computation. For each $\mathfrak{l} \in \mathcal{F}$, we assume there exists an ancillary complex $D^{(\mathfrak{l})}$ satisfying \Cref{def:measurement}. 

\begin{lemma} \label{lem:main}
    Let $\mathcal{F}$ a set of subspaces of $H_1(C)$, and $\{(D^{\mathfrak{l}},g^{\mathfrak{l}}) : \mathfrak{l} \in \mathcal{{F}}\}$ such that:
    \begin{enumerate}
        \item $(D^{\mathfrak{l}}, g^{\mathfrak{l}})$ measures $\mathfrak{l}$ in the sense of \Cref{def:measurement},
        \item $\partial_1^{D^{\mathfrak{l}}}$ and $g^{\mathfrak{l}}$ are $\omega$-sparse,
        \item $\dim D^{\mathfrak{l}}_1 + \dim D^{\mathfrak{l}}_0 \leq N_\star$,
    \end{enumerate}
    then 
     \begin{equation}
        N_\star \geq \Omega\left( \frac{\log |\mathcal{F}|}{\omega \log(n)} \right)
    \end{equation}
\end{lemma}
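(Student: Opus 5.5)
This is a counting (encoding) argument. Since $\mathfrak{l} = [g^{\mathfrak{l}}](H_1(D^{\mathfrak{l}}))$ is determined by the pair $(\partial_1^{D^{\mathfrak{l}}}, g^{\mathfrak{l}})$, the assignment $\mathfrak{l} \mapsto (\partial_1^{D^{\mathfrak{l}}}, g^{\mathfrak{l}})$ is injective. It therefore suffices to show that the number of admissible pairs, up to whatever data is irrelevant for $\mathfrak{l}$, is at most $n^{O(\omega N_\star)}$. Taking logarithms then gives $\log|\mathcal{F}| \leq O(\omega N_\star \log n)$, which is the claim. Here $n$ may also be replaced by $\max(n, N_\star)$: if $N_\star > n$ then $\log N_\star$ appears in place of $\log n$, and either the bound is trivial or it is absorbed into the constant. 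I would handle this by splitting into the cases $N_\star \le n^{c}$ and $N_\star > n^{c}$, noting that in the second case the claim $N_\star \ge \log|\mathcal{F}|/(\omega\log n)$ holds anyway, since $|\mathcal{F}| \le 2^{(2k)^2}\le 2^{4n^2}$ bounds the number of subspaces of $H_1(C)$.

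First I make precise which data determines $\mathfrak{l}$. $H_1(D)$ is computed from $D_1 \to D_0 \to D_{-1}$, but for a CSS ancilla the $X$-part is $\ker$ of the relevant boundary modulo image. Since $H_0(D)=0$ and we only need $[g](H_1(D))$, I would argue that $H_1(D) = \ker \partial_1^{D}$ (there is no $D_2$, so no boundaries to quotient by). Hence $\mathfrak{l}$ is the class in $H_1(C)$ of $g_1(\ker \partial_1^D)$, where $g_1 : D_1 \to P^C$. So $\mathfrak{l}$ is a function of only the two matrices $\partial_1^D : D_1 \to D_0$ and $g_1 : D_1 \to C_1$. This reduction (checking that the defect map $p$ and the other components of the chain map do not affect $\mathfrak{l}$) is the step I would verify most carefully against the conventions of \cite{yuan2026noncss}.

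Next I count. Relabelling basis vectors, I may assume $D_1 = \mathbb{F}_2^{a}$ and $D_0=\mathbb{F}_2^{b}$ with $a+b\le N_\star$, giving at most $N_\star^2$ choices of dimensions. Each column of $\partial_1^D$ (indexed by $D_1$) has weight at most $\omega$, so it is specified by at most $\omega$ indices in $[b]$: at most $(b+1)^{\omega}$ choices per column, and $(N_\star+1)^{\omega a}$ in total. Similarly each column of $g_1$ has at most $\omega$ nonzero entries among $2n$ positions, giving at most $(2n+1)^{\omega a}$ choices. Altogether the number of distinct pairs is at most $N_\star^2 (N_\star+1)^{\omega N_\star}(2n+1)^{\omega N_\star}$. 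By injectivity,
\begin{equation*}
\log|\mathcal{F}| \le 2\log N_\star + \omega N_\star\bigl(\log(N_\star+1) + \log(2n+1)\bigr).
\end{equation*}

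Finally, in the regime $N_\star \le \mathrm{poly}(n)$ all the logarithms are $O(\log n)$, giving $N_\star \ge \Omega(\log|\mathcal{F}|/(\omega\log n))$. In the complementary regime $N_\star > n^2$, the trivial bound $\log|\mathcal{F}|\le 4n^2$ already yields the inequality. The main obstacle is the first step: justifying that $\mathfrak{l}$ depends only on the sparse data $(\partial_1^D, g_1)$ and not on the dense defect map $p$ or on $h$. If it turned out that $H_1(D)$ required quotienting, I would instead encode $\partial_0^D$ as well, which costs the same order because it is also sparse.
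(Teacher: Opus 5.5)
Your proposal is correct and follows essentially the same route as the paper. Since $D^{\mathfrak{l}}$ has no degree-2 part, $\mathfrak{l}$ is the image of $g^{\mathfrak{l}}(\ker\partial_1^{D^{\mathfrak{l}}})$ in $H_1(C)$, so it is determined by the pair $(\partial_1^{D^{\mathfrak{l}}},g^{\mathfrak{l}})$; these pairs are counted column by column as $(1+d_0)^{\omega d_1}(1+2n)^{\omega d_1}$ over $O(N_\star^2)$ dimension choices, and the stray $\log(1+N_\star)$ term is removed using $\log|\mathcal{F}|\le 4k^2\le 4n^2$. The only differences are cosmetic: the paper splits on whether $N_\star\ge\log|\mathcal{F}|$ rather than on $N_\star>n^2$, and your count of dimension pairs should be $(N_\star+1)^2$ rather than $N_\star^2$; neither changes the conclusion.
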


\begin{proof}
    We first perform a convenient manipulation that reduces the number of $\mathfrak{l}$-parametrised maps involved:
    \begin{align}
        \mathfrak{l} &= [g^{\mathfrak{l}}] H_1(D^{\mathfrak{l}}) \\
        &= (g^{\mathfrak{l}} \ker \partial_1^{D^{\mathfrak{l}}}) / \im \partial_2^C
    \end{align}
    We see then that the cardinality of $\mathcal{F}$ is upper bounded by the number of valid pairs $(D^{\mathfrak{l}}, g^{\mathfrak{l}})$. Let us fix $d_0, d_1$, and count the number of pairs for which
    \begin{equation}
        D^{\mathfrak{l}} \in \mathbb{F}_2^{d_0 \times d_1 \ }, g^{\mathfrak{l}} \in \mathbb{F}_2^{2n \times d_1\ }
    \end{equation}
    
    Using \Cref{lem:combinatorial-bound}, we can bound the number of such matrices by $(1+d_0)^{\omega d_1}$ and $(1+2n)^{\omega d_1}$ respectively, hence the total number of possible pairs is upper bounded by $((1+d_0)(1+2n))^{\omega d_1} \leq ((1+N_\star)(1+2n))^{\omega N_\star}$.

    It remains to count the number of pairs $(d_0, d_1)$ such that $d_0 + d_1 \leq N_\star$. Note that fixing $d_0 = N_\star$, we have only $d_1 = 0$ as a choice; then if $d= N_\star-1$ we have choices $\{0,1\}$, etc. This gives at most $(1+N_\star)$ choices. We conclude
    \begin{equation}
        |\mathcal{F}| \leq (1+N_\star)^2 ((1+N_\star)(1+2n))^{\omega N_\star} 
    \end{equation}
    or, 
    \begin{equation}
        N_\star \geq \frac{\log |\mathcal{F}|}{2 + \omega\log(1+2n) + \omega \log(1+N_\star)}
    \end{equation}
    We can eliminate the $\log(1+N_\star)$ term by noting that, unless $N_\star \geq \log |\mathcal{F}|$,in which case the claim is proved, we have:
    \begin{equation}
        N_\star \geq \frac{\log |\mathcal{F}|}{2 + \omega\log(1+2n) + \omega \log(1+\log |\mathcal{F}|)}
    \end{equation}
    Then note that $\mathcal{F}$ can be upper bounded by the number of combinations of $2k$ vectors with $2k$ entries over $\mathbb{F}_2$, hence $\log |\mathcal{F}| \leq 4k^2 \leq 4n^2$. 
    After further simplification we obtain
    \begin{equation}
        N_\star \geq \Omega\left( \frac{\log |\mathcal{F}|}{\omega \log(n)} \right)
    \end{equation}
\end{proof}

Ideally, we would like to be able to measure all possible subspaces of $H_1(C)$, to allows for complete addressable parallel surgery. However the above theorem imposes a space overhead for that task. In fact we show this overhead occurs even if one only desires to measure all subspaces of dimension exactly $k/2$. We achieve this by picking a polarisation of $H_1(C) \cong \mathcal{L}_X \oplus  \mathcal{L}_Z \cong \mathbb{F}_2^k \oplus \mathbb{F}_2^k$, with $\mathcal{L}_X$ corresponding to a choice of basis for $X$ logical operators, and similarly for $\mathcal{L}_Z$. With this choice of basis we can guarantee that all the logicals in a given $\mathfrak{l} \in \mathcal{F}$ do commute and therefore constitute valid measurements.

\begin{theorem} \label{thm:k-squared}
    Let $\mathcal{F}$ be the set of all possible measurements on a code $C$, and $\{(D^{\mathfrak{l}},g^{\mathfrak{l}}) : \mathfrak{l} \in \mathcal{{F}}\}$ a corresponding set of surgery ancillas such that:
    \begin{enumerate}
        \item $(D^{\mathfrak{l}}, g^{\mathfrak{l}})$ measures $\mathfrak{l}$ in the sense of \Cref{def:measurement},
        \item $\partial_1^{D^{\mathfrak{l}}}$ and $g^{\mathfrak{l}}$ are $\omega$-sparse,
        \item $\dim D^{\mathfrak{l}}_1 + \dim D^{\mathfrak{l}}_0 \leq N_\star$,
    \end{enumerate}
    then 
     \begin{equation}
        N_\star \geq \Omega\left( \frac{k^2}{\omega \log(n)} \right)
    \end{equation}
\end{theorem}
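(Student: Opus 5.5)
The plan is to reduce \Cref{thm:k-squared} to \Cref{lem:main}. It suffices to exhibit, inside the set of all possible measurements on $C$, a family $\mathcal{F}' \subseteq \mathcal{F}$ with $\log|\mathcal{F}'| = \Omega(k^2)$. Restricting the ancillas $\{(D^{\mathfrak{l}},g^{\mathfrak{l}})\}$ to $\mathfrak{l}\in\mathcal{F}'$ preserves all three hypotheses, so \Cref{lem:main} yields $N_\star \geq \Omega(\log|\mathcal{F}'|/(\omega\log n)) = \Omega(k^2/(\omega \log n))$.

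For the family I use the polarisation $H_1(C)\cong\mathcal{L}_X\oplus\mathcal{L}_Z\cong\mathbb{F}_2^k\oplus\mathbb{F}_2^k$ fixed just before the theorem. Choose a symplectic basis $\{\bar X_i,\bar Z_i\}$ of $H_1(C)$ with respect to the symplectic form induced from $P^C$; such a basis exists because this form is nondegenerate on $H_1(C)$. Let $\mathcal{L}_X=\mathrm{span}\{\bar X_i\}$, which is Lagrangian. Every subspace of $\mathcal{L}_X$ is therefore isotropic, so it consists of commuting logicals and is a valid measurement. Take $\mathcal{F}'$ to be the set of all subspaces of $\mathcal{L}_X$ of dimension $\lfloor k/2\rfloor$. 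Its size is the Gaussian binomial
\[
\binom{k}{\lfloor k/2\rfloor}_2 \;\geq\; 2^{\lfloor k/2\rfloor (k-\lfloor k/2\rfloor)} \;\geq\; 2^{(k^2-1)/4}.
\]
This lower bound follows because, for $r=\lfloor k/2\rfloor$, the subspaces that are row spaces of matrices in reduced echelon form $[I_r\,|\,A]$ with $A\in\mathbb{F}_2^{r\times(k-r)}$ are pairwise distinct. Hence $\log|\mathcal{F}'|\geq (k^2-1)/4 = \Omega(k^2)$. For CSS codes the standard X/Z logical bases already give such a polarisation, which matches the all-X remark in the abstract.

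The main obstacle is interpretive rather than technical. "The set of all possible measurements" has to be read so that it contains $\mathcal{F}'$, meaning that every isotropic subspace counts as a measurable target. The distinct elements of $\mathcal{F}'$ must also be genuinely distinct subspaces of $H_1(C)$, which holds because we work with homology classes directly. I would state explicitly that the hypotheses only assume ancillas exist for each $\mathfrak{l}$, so passing to a subfamily is legitimate.

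Finally, I would check that the $\log(1+N_\star)$ elimination inside \Cref{lem:main} still holds, using $\log|\mathcal{F}'|\leq 4k^2\leq 4n^2$. This gives the bound with $\log n$ in the denominator and only constants absorbed.
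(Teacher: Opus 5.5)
Your proposal is correct and is essentially the paper's proof: take the family $\mathcal{F}'$ of $k/2$-dimensional subspaces of the Lagrangian $\mathcal{L}_X$, lower-bound $|\mathcal{F}'|$ by $2^{r(l-r)}$ with $r=k/2$, $l=k$, and apply \Cref{lem:main}. Your use of $\lfloor k/2\rfloor$ for odd $k$, the explicit symplectic-basis argument that subspaces of $\mathcal{L}_X$ are isotropic, and the echelon-form proof of the Gaussian binomial bound (which the paper cites rather than proves) only tidy up this same argument.
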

\begin{proof}
    Write $\mathcal{F}'$ the set of all $k/2$-dimensional subspaces of $\mathcal{L}_X$. We have $\mathcal{F}' \subset \mathcal{F}$. 
    We then apply \Cref{lem:main}, with the lower bound on $|\mathcal{F}| \geq |\mathcal{F}'|$ from \Cref{lem:number-subspaces} using $r = k/2, l = k = \dim \mathcal{L}_X$.
\end{proof}

\section{Auxillary lemmas}

\begin{lemma}\label{lem:combinatorial-bound}
    There are at most $(1+s)^{\omega t}$ $\omega$-sparse matrices of dimension $\mathbb{F}_2^{s \times t}$
\end{lemma}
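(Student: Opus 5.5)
We count the matrices column by column. An $\omega$-sparse matrix in $\mathbb{F}_2^{s\times t}$ has, in particular, column weight at most $\omega$. Each of its $t$ columns is therefore a vector in $\mathbb{F}_2^s$ with at most $\omega$ nonzero entries. The row-weight condition only removes matrices, so dropping it can only enlarge the set we count, which is fine for an upper bound. The claim thus reduces to showing that the number of vectors in $\mathbb{F}_2^s$ of weight at most $\omega$ is at most $(1+s)^\omega$; raising this to the power $t$, since the columns are chosen independently, gives $(1+s)^{\omega t}$.

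For the single-column count, encode such a vector by an $\omega$-tuple $(i_1,\dots,i_\omega)\in\{0,1,\dots,s\}^\omega$. We list the positions of the nonzero entries in increasing order and pad the remaining slots with the symbol $0$, which stands for an unused slot. Since entries lie in $\mathbb{F}_2$, a vector is determined by its support. The map from vectors to tuples is therefore injective: the support is recovered as the set of nonzero symbols in the tuple. There are $(1+s)^\omega$ such tuples, which gives the bound. Equivalently, one could write $\sum_{j=0}^{\omega}\binom{s}{j}\le\sum_{j=0}^{\omega}\binom{\omega}{j}s^j=(1+s)^\omega$, using $\binom{s}{j}\le s^j/j!\le\binom{\omega}{j}s^j$ for $j\le\omega$.

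There is essentially no obstacle here. The only point needing care is that the bound uses only the column-sparsity part of $\omega$-sparseness, which is exactly why the exponent is $\omega t$, with $t$ the number of columns, and the base is $1+s$, with $s$ the column length. This matches how the lemma is applied in the proof of \Cref{lem:main}: there $\partial_1^{D^{\mathfrak{l}}}\in\mathbb{F}_2^{d_0\times d_1}$ gives $(1+d_0)^{\omega d_1}$, and $g^{\mathfrak{l}}\in\mathbb{F}_2^{2n\times d_1}$ gives $(1+2n)^{\omega d_1}$. If $\omega> s$, the bound still holds, since it is then weaker than the trivial count $2^s\le(1+s)^s$ per column.
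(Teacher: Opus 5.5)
Your proof is correct and follows the paper's argument: drop the row-weight constraint, count column by column, bound the number of weight-$\le\omega$ vectors in $\mathbb{F}_2^s$ by $\sum_{j\le\omega}\binom{s}{j}\le(1+s)^\omega$, and raise this to the $t$-th power. Your injection into $\{0,\dots,s\}^\omega$ and the explicit comparison $\binom{s}{j}\le\binom{\omega}{j}s^j$ simply spell out the step the paper attributes to the binomial theorem.
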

\begin{proof}
    Denote $A$ such a matrix. We write it:
    \begin{equation}
        \begin{pmatrix}
            a_1 & a_2 & \dots & a_t
        \end{pmatrix}
    \end{equation}
    Where each $a_i \in \mathbb{F}_2^{s}$ is a binary vector of size $s$. If $|a_i| = \omega_i$, then there are $\binom{s}{\omega_i}$ choices for $a_i$. Considering that all choices up to weight $\omega$ are allowed, this gives
    \begin{equation}
        \sum_{\omega'=0}^{\omega} \binom{s}{\omega'} \leq (1+s)^{\omega}
    \end{equation}
    Where the inequality comes from the binomial theorem. With $t$ such vectors, we conclude there $(1+s)^{\omega t}$ choices for $A$.
    
\end{proof}

\begin{lemma}[\cite{koetter2007coding}] \label{lem:number-subspaces}
    The number of $r$-dimensional subspaces of $\mathbb{F}_2^l$ is lower bounded by $2^{r(l-r)}$.
\end{lemma}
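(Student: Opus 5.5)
The statement to prove is \Cref{lem:number-subspaces}: $\mathbb{F}_2^l$ has at least $2^{r(l-r)}$ subspaces of dimension $r$. The plan is to count ordered bases and divide by the number of bases of a fixed subspace. This gives the Gaussian binomial coefficient exactly; a termwise comparison then yields the bound.

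\textbf{Step 1: count ordered bases.} The number of ordered $r$-tuples of linearly independent vectors in $\mathbb{F}_2^l$ is
\begin{equation*}
(2^l-1)(2^l-2)\cdots(2^l-2^{r-1}).
\end{equation*}
This holds because once $i$ independent vectors are chosen, the $(i+1)$-th must avoid their span, which has $2^i$ elements.

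\textbf{Step 2: count bases of one subspace.} Each $r$-dimensional subspace has exactly
\begin{equation*}
(2^r-1)(2^r-2)\cdots(2^r-2^{r-1})
\end{equation*}
ordered bases, by the same argument applied inside the subspace.

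\textbf{Step 3: take the ratio.} Each $r$-dimensional subspace is spanned by the same number of ordered bases, so the number of such subspaces is
\begin{equation*}
\prod_{i=0}^{r-1}\frac{2^l-2^i}{2^r-2^i}=\prod_{i=0}^{r-1}\frac{2^{l-i}-1}{2^{r-i}-1}.
\end{equation*}

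\textbf{Step 4: bound each factor.} The key inequality is
\begin{equation*}
\frac{2^a-1}{2^b-1}\geq 2^{a-b} \quad \text{for } a\geq b\geq 1.
\end{equation*}
Cross-multiplying, it is equivalent to $2^{a}-1\geq 2^{a}-2^{a-b}$, i.e.\ $2^{a-b}\geq 1$, which holds. Apply it with $a=l-i$ and $b=r-i$. Since $r\leq l$ we have $a\geq b$, and since $i\leq r-1$ we have $b\geq 1$. Each of the $r$ factors is therefore at least $2^{l-r}$, and the product is at least $2^{r(l-r)}$.

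\textbf{Edge case.} When $r=0$ the product is empty and there is exactly one subspace, so the bound $2^0=1$ holds trivially.

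The argument has no real obstacle; the only point needing care is the direction of the inequality in Step 4.
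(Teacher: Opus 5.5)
Your proof is correct. It is the standard argument: count the subspaces exactly as the Gaussian binomial $\prod_{i=0}^{r-1}\frac{2^{l-i}-1}{2^{r-i}-1}$, then bound each factor below by $2^{l-r}$. This is how the cited Koetter--Kschischang result obtains its lower bound; the paper itself gives no proof and simply cites that reference.
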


\bibliography{refs}

@article{steane1997active,
  title = {Active Stabilization, Quantum Computation, and Quantum State Synthesis},
  author = {Steane, A. M.},
  journal = {Phys. Rev. Lett.},
  volume = {78},
  issue = {11},
  pages = {2252--2255},
  numpages = {0},
  year = {1997},
  month = {Mar},
  publisher = {American Physical Society},
  doi = {10.1103/PhysRevLett.78.2252},
  url = {https://link.aps.org/doi/10.1103/PhysRevLett.78.2252}
}

@INPROCEEDINGS{shor1990fault,
  author={Shor, P.W.},
  booktitle={Proceedings of 37th Conference on Foundations of Computer Science}, 
  title={Fault-tolerant quantum computation}, 
  year={1996},
  volume={},
  number={},
  pages={56-65},
  doi={10.1109/SFCS.1996.548464}}

@article{Cohen2022,
  author = {Lawrence Z. Cohen  and Isaac H. Kim  and Stephen D. Bartlett  and Benjamin J. Brown },
  title = {Low-overhead fault-tolerant quantum computing using long-range connectivity},
  journal = {Science Advances},
  volume = {8},
  number = {20},
  pages = {eabn1717},
  year = {2022},
  doi = {10.1126/sciadv.abn1717},
  URL = {https://www.science.org/doi/abs/10.1126/sciadv.abn1717},
  eprint = {https://www.science.org/doi/pdf/10.1126/sciadv.abn1717}
}

@misc{Williamson2024,
  title={Low-overhead fault-tolerant quantum computation by gauging logical operators}, 
  author={Dominic J. Williamson and Theodore J. Yoder},
  year={2024},
  eprint={2410.02213},
  archivePrefix={arXiv},
  primaryClass={quant-ph},
  url={https://arxiv.org/abs/2410.02213}, 
}

@misc{cowtan2025parallel,
  title={Parallel logical measurements via quantum code surgery},
  author={Cowtan, Alexander and He, Zhiyang and Williamson, Dominic J and Yoder, Theodore J},
  eprint={2503.05003},
  archivePrefix={arXiv},
  primaryClass={quant-ph},
  url={https://arxiv.org/abs/2503.05003}, 
  year={2025}
}

@article{Huang2023,
  title = {Homomorphic Logical Measurements},
  author = {Huang, Shilin and Jochym-O'Connor, Tomas and Yoder, Theodore J.},
  journal = {PRX Quantum},
  volume = {4},
  issue = {3},
  pages = {030301},
  numpages = {14},
  year = {2023},
  month = {Jul},
  publisher = {American Physical Society},
  doi = {10.1103/PRXQuantum.4.030301},
  url = {https://link.aps.org/doi/10.1103/PRXQuantum.4.030301}
}

@INPROCEEDINGS{koetter2007coding,
  author={Koetter, Ralf and Kschischang, Frank R.},
  booktitle={2007 IEEE International Symposium on Information Theory}, 
  title={Coding for Errors and Erasures in Random Network Coding}, 
  year={2007},
  volume={},
  number={},
  pages={791-795},
  doi={10.1109/ISIT.2007.4557321}}

@article{Horsman2012,
   title={Surface code quantum computing by lattice surgery},
   volume={14},
   ISSN={1367-2630},
   url={http://dx.doi.org/10.1088/1367-2630/14/12/123011},
   DOI={10.1088/1367-2630/14/12/123011},
   number={12},
   journal={New Journal of Physics},
   publisher={IOP Publishing},
   author={Horsman, Dominic and Fowler, Austin G and Devitt, Simon and Meter, Rodney Van},
   year={2012},
   month=dec, pages={123011} }

@misc{yuan2026noncss,
      title={Non-CSS Quantum Code Embedding}, 
      author={Andrew C. Yuan and Nouédyn Baspin},
      year={2026},
      eprint={2608.16995},
      archivePrefix={arXiv},
      primaryClass={quant-ph},
      url={https://arxiv.org/abs/2608.16995}, 
}

@article{bombin2009measurements,
   title={Quantum measurements and gates by code deformation},
   volume={42},
   ISSN={1751-8121},
   url={http://dx.doi.org/10.1088/1751-8113/42/9/095302},
   DOI={10.1088/1751-8113/42/9/095302},
   number={9},
   journal={Journal of Physics A: Mathematical and Theoretical},
   publisher={IOP Publishing},
   author={Bombin, H and Martin-Delgado, M A},
   year={2009},
   month=Feb, pages={095302} }

@misc{cowtan2026nearoptimal,
      title={Near-optimal high-rate surgery from linear PCPPs}, 
      author={Alexander Cowtan and Benjamin Ide},
      year={2026},
      eprint={2609.26973},
      archivePrefix={arXiv},
      primaryClass={quant-ph},
      url={https://arxiv.org/abs/2609.26973}, 
}
\end{document}